\documentclass{article}


\usepackage{latexsym}
\usepackage{tabularx}
\usepackage{amsfonts}
\usepackage{amsthm}
\usepackage{todonotes}


\usepackage{xcolor}
\usepackage{pst-all} 
\usepackage{pst-poly}
\newpsobject{showgrid}{psgrid}{subgriddiv=1,griddots=10,gridlabels=6pt}
\usepackage{graphicx}
\usepackage{fancybox}
\usepackage{wrapfig}

\usepackage{algorithm}
\usepackage{algpseudocode}

\newcommand{\con}{\wedge} 
\newcommand{\dis}{\vee} 
\newcommand{\alw}{\Box} 
\newcommand{\imp}{\Rightarrow} 
\newcommand{\som}{\Diamond} 




\newtheorem{definition}{Definition}
\newtheorem{theorem}{Theorem}
\newtheorem{corollary}{Corollary}

\title{Towards a Pattern-based Automatic Generation of Logical Specifications for Software Models}


\author{Rados{\l}aw Klimek\\AGH University of Science and Technology,\\ al.\ A.~Mickiewicza 30, 30-059 Krakow, Poland}
\date{2013}

\begin{document}
\maketitle




\begin{abstract}
The work relates to the automatic generation of logical specifications,
considered as sets of temporal logic formulas,
extracted directly from developed software models.
The extraction process is based on the assumption that
the whole developed model is structured using only predefined workflow patterns.
A method of automatic transformation of workflow patterns to
logical specifications is proposed.
Applying the presented concepts enables bridging the gap between
the benefits of deductive reasoning for the correctness verification process and
the difficulties in obtaining complete logical specifications for this process.\\
\textbf{Keywords:} 
formal methods; temporal logic;
generating logical specifications; workflow patterns;
deductive reasoning
\end{abstract}



\section{Introduction}
\label{sec:introduction}

The deductive approach is always an essential part of the human thought process and
might provide the natural support for reasoning about
system correctness and guarantee a rigorous approach in software constructions.
The approach also dominates scientific works and is used intuitively in everyday life.
It enables the analysis of infinite computation sequences.
The need to manually build a systems's logical specification
seems to be a significant obstacle to the practical use of deduction-based verification tools.
Hence, the challenge of automating the process of obtaining logical specifications
seems to be understandable, justified and particularly important.
A large class of software models can be organized and developed using workflow patterns,
e.g.\ business models expressed in BPMN~\cite{Aalst-etal-2003},
or the UML activity diagrams~\cite{Klimek-2013-sefm}.
However, the presented approach is more general and, for example,
business models should not be related directly to the patterns of the work despite
any literal similarity.

The main purpose of the work is to propose a kind of language for
expressing software models and modeling logical specifications.
The main idea is the generation of logical specifications,
a process based on both workflow patterns and their predefined temporal logic formulas
which describe every pattern's behavior.
The work's contribution is a method that
automates the generation of logical specifications.
The generation algorithm for workflow patterns is presented.
The proposed method is characterized by the following advantages:
introducing patterns as primitives to logical modeling,
scaling up to real-world problems,
and logical patterns once defined,
e.g.\ by a logician or a person with good skills in logic,
then widely used,
e.g.\ by analysts and developers with less skills in logic.
All these factors are discussed in the work and summarized in the last section.

The considerations in the work are focused on
the \emph{propositional linear time logic} PLTL,
for which syntax and semantics are defined in~\cite{Emerson-1990}.
It should be pointed that atomic propositions could be identical to
atomic activities, or tasks, of which particular patterns can be constructed.
The work's considerations are justifiably limited to
the minimal temporal logic, e.g.~\cite{vanBenthem-1995}.
Less expressiveness of this logic, compared to more complex ones,
can be compensated or counterbalanced for the fact
that it is much easier to build a deduction engine,
or use the existing valid provers, for this logic.
For example, the deduction engine for the semantic tableaux method for
the minimal temporal logic is relatively easy to build.
Thus, the proposed approach can be verified more quickly than
in the case of more complex logical systems.
Last but not least, the work is focused on
automating the process of generating logical specifications for temporal logic
and minimal temporal logic seems to be the first step in the research,
with more complex logics to follow.

\section{Pattern-oriented logical modeling}
\label{sec:pattern-based-modeling}

A \emph{pattern} is a generic description of structure of some computations,
and does not limit the possibility of modeling arbitrary complex models.
Every workflow pattern can be linked
with a set of temporal logic formulas describing its properties (liveness and safety),
i.e.\ expressing the behavior of tasks or activities included in a pattern.

Let us assume that syntactically correct temporal logic formulas
are already defined~\cite{Emerson-1990}.
\begin{definition}
The \emph{elementary set of formulas}
$pat(a_{1}, \ldots, a_{n})$, or $pat()$,
build over atomic formulas $a_{1}, \ldots, a_{n}$
is a set of temporal logic formulas $f_{1}, \ldots, f_{m}$
such that all formulas are syntactically correct,
i.e.\ $pat()=\{ f_{1}, \ldots, f_{m} \}$.
\end{definition}
A general example for the definition is given as
$pat(a,b,c)=\{a\imp\som b, \alw\neg(b \con\neg c)\}$
which is a two-element set of LTL formulas created over three atomic formulas.

The key notion for the work is a logical pattern considered as a structure.
\begin{definition}
\label{def:logical-pattern}
The \emph{logical pattern} is a structure $Pattern=\langle ini,fin,pat()\rangle $,
where  $ini$ and $fin$ are logical statements of classical propositional logic
describing the logical circumstances of, respectively,
the start and the termination of the whole workflow pattern execution,
and $pat()$ is an elementary set of temporal logic formulas describing
the behavior of a workflow pattern.
\end{definition}
\begin{figure}[htb]
\centering
\scalebox{0.7}{
\begin{pspicture}(6,7) 
\psset{framearc=0,linewidth=1.5pt}
\psset{shadow=false,shadowcolor=gray}
\psset{linecolor=black,nodesep=1pt}
\cnode(2,6){3pt}{p1}
\cnode(4,6){3pt}{p2}
\cnode*(3,4.5){2pt}{p3}
\ncline{->}{p1}{p3}
\ncline{->}{p2}{p3}
\cnode*(3,3){2pt}{p4}
\ncline{->}{p3}{p4}
\nccurve[angleA=0,angleB=0,nodesep=2pt]{->}{p4}{p3}
\cnode*(3,2){2pt}{p5}
\ncline{->}{p4}{p5}
\cnode*(3,1){2pt}{p6}
\ncline{->}{p5}{p6}
\cnode*(2,2){2pt}{p7}
\psline[linewidth=.15cm](2.8,1)(3.2,1)
\ncline{->}{p4}{p7}
\psline[linewidth=.15cm](1.8,2)(2.2,2)
\rput(3,6.8){\rnode{t1}{\textcolor{blue}{ini}}}
\rput(2,.3){\rnode{t2}{\textcolor{blue}{fin}}}
\nccurve[linecolor=blue,angleA=180,angleB=90,linestyle=dotted,linewidth=1pt]{-}{t1}{p1}
\nccurve[linecolor=blue,angleA=0,angleB=90,linestyle=dotted,linewidth=1pt]{-}{t1}{p2}
\nccurve[linecolor=blue,angleA=0,angleB=270,linestyle=dotted,linewidth=1pt]{-}{t2}{p6}
\nccurve[linecolor=blue,angleA=180,angleB=250,linestyle=dotted,linewidth=1pt]{-}{t2}{p7}
\end{pspicture}
}
\caption{The illustration for $ini$- and $fin$-conditions for a pattern}
\label{fig:illustration-ini-fin}
\end{figure}
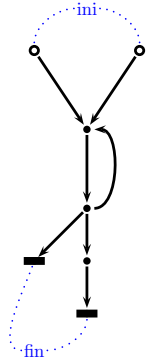
The meaning of $pat()$ does not raise any questions.
$ini$ and $fin$ describe logic circumstances associated with, respectively,
the initiation/opening and termination/closing of a pattern.
$ini$ is satisfied when some initial activity or activities are active.
For example, $a \con b$ for $ini$ means that when the execution of a pattern
is initiated then both activities $a$ and $b$ are satisfied.
$a \dis b$ for $fin$ means that when execution of a pattern is to be terminated,
then activity $a$ or activity $b$ is active.
In other words,
$ini$- and $fin$-expressions (conditions) describe, respectively, the first and
the last activity, or activities, of the pattern which are active.
(Thus, $ini$ and $fin$ should not be confused with the well-known,
pre- and post-conditions, respectively.)
The idea of $ini$ and $fin$ conditions for a pattern is shown in Fig.~\ref{fig:illustration-ini-fin}.
Transitions illustrate flows of control inside a pattern.
$ini$ and $fin$ are expressed in classical propositional logic since they are valuated at a certain time point
as point expressions.

$Pattern.pat$ is a general notation for a set of formulas for a pattern,
and in the case when a particular pattern is considered
it can be written as $Seq.pat$ (for the Sequence pattern),
or, if it does not lead to ambiguity then $Seq$.
Similarly, instead of $Seq.ini$ and $Seq.fin$ can be written $ini$ and $fin$.

Logical consistency is one of the most important properties of logical systems.
A set of logical formulas is \emph{consistent} if it does not contain contradiction.
In other words, logical theory is consistent if and only if (iff)
there is no situation that both $\Phi$ and $\neg\Phi$ are
provable from the axioms and formulas of the theory.
When building logical patterns for workflow patterns,
it is assumed that logical patterns are consistent,
i.e.\ prepared \emph{basic formulas} $pat()=\{ p_{1} , ... , p_{n} \}$ are consistent,
and also all temporal formulas describing behavior/valuation over the $ini$- and $fin$-conditions,
let us call them \emph{transition formulas}, are consistent,
e.g.\ $\som ini$, $ini \imp \som fin$, etc.
\begin{definition}
\label{def:pattern-consistent}
The logical pattern is \emph{consistent} iff
all formulas associated with this pattern,
including formulas for $ini$- and $fin$-conditions,
i.e.\ $pat() \cup \{\som ini, ini \imp \som fin, \som fin, \alw\neg(ini \con fin) \}$
are consistent.
\end{definition}
The discussed transition formulas are important and describe
both safety and liveness aspect of a pattern which is considered as a whole,
i.e.\ in terms of the $ini$- or $fin$-conditions.
In certain situations, if desired, it enables
abstracting from details of a pattern internal behavior and examining it as a whole.
Both $ini$- and $fin$-expressions have equal rights to the pattern representation.
\begin{corollary}
Every pattern contains, and its logical pattern describes, the structure consisting of
at least two activities (or tasks).
\end{corollary}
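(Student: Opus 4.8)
The plan is to argue by contradiction, exploiting the consistency requirement of Definition~\ref{def:pattern-consistent}, which every logical pattern is assumed to satisfy. Suppose a pattern contained strictly fewer than two activities; I would split into the two degenerate cases and, in each, derive a contradiction with the consistency of the set $pat() \cup \{\som ini, ini \imp \som fin, \som fin, \alw\neg(ini \con fin)\}$.

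First I would dispose of the empty case. If a pattern contained no activity at all, there would be no atomic proposition for $ini$ to mention, so the only logical statements available for $ini$ are the propositional constants $\true$ or $\false$. The constant $\false$ makes $\som ini$ unsatisfiable, immediately violating consistency; and $\true$ cannot record that ``some initial activity is active,'' since no activity exists, so it fails to describe a pattern in the sense of the semantic reading given after Definition~\ref{def:logical-pattern}. Either way, no consistent logical pattern over an empty activity set exists.

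The substantive case is a pattern with exactly one activity $a$. Here the crucial observation, taken from the intended semantics of $ini$ and $fin$, is that the single activity $a$ is simultaneously the first and the last activity of the pattern, so both conditions assert that $a$ is active: up to logical equivalence $ini \equiv a$ and $fin \equiv a$. Substituting into the consistency set, the safety formula becomes $\alw\neg(a \con a) \equiv \alw\neg a$, while the liveness formula $\som ini$ becomes $\som a$. These two are jointly unsatisfiable---$\alw\neg a$ forbids $a$ at every time point whereas $\som a$ demands it at some point---so the pattern would be inconsistent, contradicting the standing assumption. Hence a consistent logical pattern must describe at least two activities.

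The main obstacle will be the bridge between the informal counting of ``activities'' and the formal content of $ini$ and $fin$: the structure $\langle ini,fin,pat()\rangle$ does not mention activities as first-class objects, so the step $ini \equiv a \equiv fin$ must be justified entirely from the intended reading that $ini$ (resp.\ $fin$) names the active first (resp.\ last) activity. Making this identification precise---rather than leaving it at the intuitive level---is the one place where the argument rests on the semantic commentary surrounding the definitions rather than on the symbolic machinery alone.
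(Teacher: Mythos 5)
Your argument is correct and follows essentially the same route as the paper, which likewise observes that for a one-activity pattern $ini$ and $fin$ coincide so that $\alw\neg(ini \con fin)$ destroys consistency (the paper offers only this one-line remark ``instead of a formal proof''). Your version is somewhat more complete---you make the contradiction explicit by pairing $\alw\neg(ini\con fin)$ with $\som ini$ and you also dispose of the zero-activity case---but the key idea is identical.
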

Instead of a formal proof, note that,
in the general case, e.g.\ for a one-activity pattern,
if the above statement is not satisfied
then the formula $\alw\neg(ini \con fin)$ is also not satisfied.

\begin{algorithm}[htb]
\caption{Labeling logical expressions}
\label{alg:labeled-expression}
{\small
\begin{algorithmic}[1]
\algrenewcommand\algorithmicrequire{\textbf{Input:}}
\algrenewcommand\algorithmicensure{\textbf{Output:}}
\Require logical expression $W_{L}$ (non-empty)
\Ensure labeled logical expression $W_{L}'$
\Statex \Comment{Lexical tokens are processed/scanned one by one from left to right}
\State{l:= 0; pre:= $\epsilon$;}  \Comment{l=current label, pre=previous label}
\For{every $token$ scanned in the whole expression}
\If{token=``(''}
\State{l$++$; replace ``('' by ``(l]'' for the token}
\ElsIf{token=``)''}
\If{pre=``)''} \State l$--$ \EndIf
\State{replace ``)'' by ``[l)'' for the token}
\EndIf
\State{pre:= token}
\EndFor
\end{algorithmic}
}
\end{algorithm}
A symbolic notation allows for literal representation of a structure of
an arbitrary complexity and including nested patterns.
\begin{definition}
\label{def:logical-expression}
The \emph{logical expression} $W_{L}$ is a structure created using the following rules:
\begin{enumerate}
\item every elementary set $pat(a_{i})$,
      where $i>0$ and every $a_{i}$ is an atomic formula,
      is a logical expression,
\item every $pat(A_{i})$, where $i>0$ and every $A_{i}$ is either
      \begin{itemize}
      \item an atomic formula, or
      \item a logical expression $pat()$,
      \end{itemize}
      is also a logical expression.
\end{enumerate}
\end{definition}
\begin{definition}
\label{def:labeled-logical-expression}
The \emph{labeled logical expression} $W_{L}'$ is
received from a logical expression $W_{L}$
introducing (numerical) labels into the logical expression using
the Algorithm~\ref{alg:labeled-expression}.
\end{definition}
The labeled logical expression shows directly the nested structure of patterns.
The function of the \emph{maximum label} $max$ for an expression
returns the maximum (numerical) value of the label placed in the expression.\\
$Seq(1]a,Seq(2]ParalSplit(3]b,c,d[3),Synchron(3]e,f,g[3)[2)[1)=w$ is an example
of the labeled expression which is intuitive in that it shows
the sequence that leads to a parallel split and then synchronization of
some activities and the maximum label is $max(w)=3$.

\begin{figure}[htb]
\centering
{\footnotesize
\begin{minipage}{.4\linewidth}
\begin{verbatim}
Seq(f1,f2):
ini= f1 / fin= f2
f1 => <>f2 / ~f1=>~<>f2
[]~(f1 & f2)
Concur(f1,f2,f3):
ini= f1 / fin= f2 | f3
f1 => <>f2 & <>f3  /  ~f1 => ~<>f2 & ~<>f3
[]~(f1 & (f2 | f3))
Branch(f1,f2,f3):
ini= f1 / fin= (f2 & ~f3) | (~f2 & f3)
f1 => (<>f2 & ~<>f3) | (~<>f2 & <>f3)
~f1 => ~<>(f1 | f2)
[]~(f2 & f3) / []~((f1 & f2) | (f1 & f3))
\end{verbatim}
\end{minipage}
}
\caption{A sample predefined set $P$}
\label{fig:predefined-P}
\end{figure}
Logical properties of any pattern from \emph{predefined temporal patterns}~$\Pi$
are expressed in temporal logic and stored in the predefined set $P$,
c.f.\ example in Fig.~\ref{fig:predefined-P}.
The set is a plain ASCII text/file
to indicate that it can be easy modified using a simple text editor.
Most elements of the $P$~set,
i.e.\ two temporal logic operators, classical logic operators, are not in doubt.
The slash allows to separate formulas placed in a single line.
$f_{1}$, $f_{2}$ etc.\ are atomic formulas
and constitute a kind of formal arguments for a pattern.
Although this sample set $P$ contains only three predefined temporal patterns
$\Pi=\{ Seq, Concur, Branch\}$,
i.e.\ sequential order of activities,
and fork of control to enable concurrent execution of activities,
and conditional execution of activities,
there is no difficulty with defining a set of elementary formulas for other workflow patterns.
The considerations of this work are limited to three patterns to present the main idea of
the work which is the pattern-based generation of logical specifications.

Let us supplement Definition~\ref{def:logical-pattern}, which provides information about
the partial order ($\preceq$) for the set of atomic formulas as arguments of a pattern.
It consists of three subsets which are pairwise disjoint:
the $ini$ arguments (at least one element),
ordinary arguments (may be empty), and
the $fin$ arguments (at least one element).
$f_{i} \preceq f_{j}$ iff $f_{i} \prec f_{j}$ or $f_{i} = f_{j}$.
$f_{i} = f_{j}$ iff both $f_{i}$ and $f_{j}$ belong exclusively to only one of the three subsets.
$f_{i} \prec f_{j}$ iff $f_{i}$ belongs to $ini$ arguments and $f_{j}$ belongs to ordinary arguments, or
$f_{i}$ belongs to ordinary arguments and $f_{j}$ belongs to $fin$ arguments.
All the $ini$ arguments (and no others) form the $ini$-expression.
All the $fin$ arguments (and no others) form the $fin$-expression.
None of the ordinary arguments are included either in the $ini$- or $fin$-expression.

One may need to calculate the consolidated $ini$- and $fin$-expressions to obtain
expressions for complex and nested patterns.
\begin{definition}
\label{def:consolidated-conditions}
Let $w^{c}$ for a logical expression $w$ with the upper index $c=i$ (or $f$, respectively) be
the \emph{consolidated $ini$-expression} (or the \emph{consolidated $fin$-expression}, respectively).
The consolidated expression is calculated using the following (recursive) rules:
\begin{enumerate}
\item if there is no pattern itself in the place of any atomic argument which syntactically belongs to
      the $ini$-expression (or the $fin$-expression, respectively) $w$,
      then $w^{i}$ is equal to $w.ini$ ($w^{f}$ is equal to $w.fin$, respectively),
\item if there is a pattern, say $t()$, in a place of any atomic argument, say $r$, which syntactically belongs to
      the $ini$-expression (or the $fin$-expression, respectively) of $w$,
      then $r$ is replaced by $t^{i}$ (or $t^{f}$, respectively) for every such case.
\end{enumerate}
\end{definition}
Examples of consolidated expressions with regard to the patterns from Fig.~\ref{fig:predefined-P} are given as follows.
For $w=Seq(a,b)$ conditions are $w^{i}=a$ and $w^{f}=b$ (step~1).
For $w=Concur(a,Seq(b,c),d)$ conditions are $w^{i}=a$ (step~1) and
$w^{f}=c \dis d$ (step~2 gives $Seq^{f}=$``$c$'' and step~1 gives ``$\dis d$'', which is consolidated to ``$c \dis d$'').
For $w=Concur(a,Seq(b,Concur(c,d,e)),f)$ conditions are $w^{i}=a$ and $w^{f}=(d \dis e) \dis f$
(step~2 gives $Seq^{f}$ and step~1 gives ``$\dis f$'',
the next step~2 for $Seq^{f}$ leads to $Concur^{f}=$``$(d \dis e)$'',
and after consolidation ``$(d \dis e) \dis f$'' is obtained).

\section{Generation of logical specifications}
\label{sec:generation-specifications}

\begin{figure}[htb]
\centering
\scalebox{0.7}{
\begin{pspicture}(6,5.5) 
\psset{framearc=0}
\psset{shadow=false,shadowcolor=gray}
\psset{linecolor=black}
\cnode(0,4.3){0pt}{g0}
\rput(3.2,4.3){\rnode{g1}{\psframebox
                      {\begin{tabular}{c}
                            \textcolor{black}{\textsc{Patterns}}\\
                            \textcolor{black}{\textsc{scanner}}
                      \end{tabular}}}}
\rput(.2,.5){\rnode{g3}{P}}
\rput(3.2,1.5){\rnode{g2}{\psframebox
                      {\begin{tabular}{c}
                            \textcolor{black}{\textsc{Logical}}\\
                            \textcolor{black}{\textsc{specification}}\\
                            \textcolor{black}{\textsc{generator (${\cal A}$\ref{alg:generating-specification})}}
                      \end{tabular}}}}
\rput(6,1.5){\rnode{g4}{}}

\psset{linewidth=1.5pt}
\psset{shadow=false}
\psset{linecolor=black}
\ncline[angleA=90,angleB=0]{->}{g0}{g1}\Aput{Model\qquad}
\ncline[angleA=0,angleB=0]{->}{g1}{g2}\Aput{$W_{L}$}
\nccurve[angleA=90,angleB=180,nodesep=2pt]{->}{g3}{g2}
\ncline[angleA=90,angleB=0]{->}{g2}{g4}\Aput{$L$}
\psframe[linestyle=dotted,linewidth=1pt](1.1,0.1)(5.3,5.4)
\end{pspicture}
}
\caption{System for generating logical specifications}
\label{fig:formulas-generator}
\end{figure}
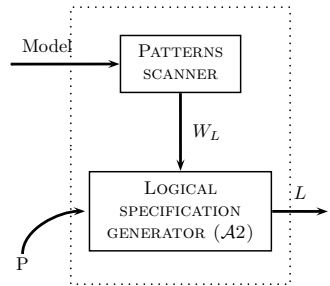
Generating logical specifications requires
an algorithm that converts a logical expression to the target logical specification.
\begin{definition}
\label{def:logical-specification}
\emph{Logical specification} $L$ consists of all formulas derived from
a logical expression $W_{L}$ using the Algorithm~\ref{alg:generating-specification}
(shortly ${\cal A}$\ref{alg:generating-specification}),
i.e.\ $L(W_{L}) = \{f_{i} : i \geq 0 \con f_{i} \in {\cal A}\ref{alg:generating-specification}(W_{L},P)\}$,
where $f_{i}$ is a PLTL formula.
\end{definition}

Generating a logical specification is not a simple summation of
formula collections resulting from the components of a logical expression.
The architecture of the whole proposed system is shown in Fig.~\ref{fig:formulas-generator}.
The generation algorithm~${\cal A}$\ref{alg:generating-specification} has two inputs.
The first is a logical expression $W_{L}$ (Def.~\ref{def:logical-expression})
which is a kind of variable in that it varies for every
(workflow) model.
The second is a predefined set~$P$ (Fig.~\ref{fig:predefined-P})
which is a kind of constant in that it is predefined for
a certain class of software models.
The output of the generation algorithm ${\cal A}$\ref{alg:generating-specification}
is a logical specification $L$ (Def.~\ref{def:logical-specification})
understood as a set of temporal logic formulas.

\begin{algorithm}[htb]
\caption{Generating logical specifications}
\label{alg:generating-specification}
{\small
\begin{algorithmic}[1]
\algrenewcommand\algorithmicrequire{\textbf{Input:}}
\algrenewcommand\algorithmicensure{\textbf{Output:}}
\Require Logical expression $W_{L}$ (non-empty), predefined set $P$ (non-empty)
\Ensure Logical specification $L$
\State $L:=\emptyset$ \Comment{initiating specification}\label{alg:2:atomic-ini}
\For{$l$:= $max(W_{L}')$ \textbf{to} 1}\label{alg:2:atomic-for}
\State p:= getPat($W_{L}',l$); \Comment{get the first (lefmost) pattern with the label $l$}
\Repeat
\If{pattern $p$ consists only atomic formulas}\label{alg:2:atomic-s}
\State{$L := L \cup p.pat()$}
\EndIf\label{alg:2:atomic-e}
\If{any argument of the $p$ is a pattern itself}\label{alg:2:non-atomic-s}
\State{Specification $L'$ for every combination $C_{i=1,\ldots,n}$, i.e.\ $L'(C_{i})$, are}
\State{calculated considering $ini$- and $fin$-expressions for every non-atomic}
\State{arguments and substituting consolidated expressions in places}
\State{of these patterns as arguments, i.e.\ $L := L \cup L'(C_{i})$}
\EndIf\label{alg:2:non-atomic-e}
\State{p:= getPat($W_{L}',l$)}  \Comment{get the next pattern with the label $l$}
\Until{$p$ is empty}
\EndFor
\end{algorithmic}
}
\end{algorithm}
When the Algorithm~\ref{alg:generating-specification} is analyzed,
it seems that lines \ref{alg:2:non-atomic-s} to \ref{alg:2:non-atomic-e} need some clarification.
For example, the consideration of the logical expression $p(a,q(),r(),d)$,
where only $a$ and $d$ are atomic formulas,
leads to the following combinations $C_{i=1,\dots,4}$:
$p(a,q^{i}(),r^{i}(),d)$,
$p(a,q^{i}(),r^{f}(),d)$,
$p(a,q^{f}(),r^{i}(),d)$, and
$p(a,q^{f}(),r^{f}(),d)$.
Let us supplement the ${\cal A}$\ref{alg:generating-specification} by some other examples.
For lines \ref{alg:2:atomic-s}--\ref{alg:2:atomic-e}:
$Branch(a,b,c)$ gives
$L=\{ a \imp (\som b \con \neg\som c) \dis (\neg\som b \con \som c),
\neg a \imp\neg\som (b \dis c),
\alw\neg (b \con c), \alw\neg((a \con b) \dis (a \con c)) \}$.
The example for lines \ref{alg:2:non-atomic-s}--\ref{alg:2:non-atomic-e}:
$Concur(Seq(a,b),c,d)$ leads to
$L = \{ a \imp\som b, \neg a \imp\neg\som b, \alw\neg (a \con b)\} \cup
\{ a \imp \som c \con \som d, \neg a \imp \neg\som c \con \neg\som d, \alw\neg (a \con (c \dis d)) \} \cup
\{ b \imp \som c \con \som d, \neg b \imp \neg\som c \con \neg\som d, \alw\neg (b \con (c \dis d)) \}$,
where the first set results from the nested pattern,
the second results from the use of the $ini$-expression considered as an argument instead of the nested pattern,
and the third results from the use of the $fin$-expression also considered as an argument instead of the nested pattern.
Thus, the final specification is
$L = \{
a \imp\som b, \neg a \imp\neg\som b, \alw\neg (a \con b),
a \imp \som c \con \som d, \neg a \imp \neg\som c \con \neg\som d, \alw\neg (a \con (c \dis d)),
b \imp \som c \con \som d, \neg b \imp \neg\som c \con \neg\som d, \alw\neg (b \con (c \dis d))
\}$.

The basic question is always consistency of logic specifications.
\begin{theorem}
Supposing that every pattern of the $P$ set is non-empty and consistent
and every pair of patterns has disjointed set of atomic formulas
then the logical specification obtained for the ${\cal A}$\ref{alg:generating-specification} algorithm is consistent.
\end{theorem}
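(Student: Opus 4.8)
The plan is to reduce consistency to satisfiability and then build a single model for the whole specification $L$. Since the introduction fixes PLTL with the proof system of~\cite{Emerson-1990}, I will use that, for this logic, a set of formulas is consistent exactly when it is satisfiable, i.e.\ when some $\omega$-sequence of propositional valuations validates every formula of the set at the initial instant. So it suffices to produce one computation validating all of $L$ at once.

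The semantic fact I would isolate first is a \emph{disjoint-support merge lemma}: if $S_{1}$ and $S_{2}$ are satisfiable and no atomic formula occurs in both, then $S_{1}\cup S_{2}$ is satisfiable. The proof is routine — take models $M_{1},M_{2}$ and, at each instant, let the merged valuation copy the $S_{1}$-atoms from $M_{1}$ and the $S_{2}$-atoms from $M_{2}$; because the truth of a PLTL formula at an instant depends only on the future valuations of the atoms actually occurring in it, every formula keeps its truth value. This lemma is precisely what the hypothesis ``every pair of patterns has disjoint sets of atomic formulas'' buys.

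I would then induct on the nesting structure of $W_{L}$, measured by $max(W_{L}')$ and the number of pattern instances. In the base case $W_{L}$ is a single pattern over atomic arguments, so $L=pat()$, which is satisfiable directly from Definition~\ref{def:pattern-consistent}. Independent patterns at the same level have disjoint atoms by hypothesis, so their specifications are glued by the merge lemma. In the inductive step, a pattern $p(\ldots,t_{j}(),\ldots)$ gives, by the induction hypothesis, a model $M_{j}$ for each nested child $L(t_{j})$; the children being pairwise atom-disjoint, the merge lemma combines them into one model of $\bigcup_{j}L(t_{j})$. What remains is to satisfy, in that same model, every instantiated copy of $p.pat()$ emitted in lines~\ref{alg:2:non-atomic-s}--\ref{alg:2:non-atomic-e}.

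This last point is where I expect the real work to lie, and it is the main obstacle. The emitted copies are \emph{not} atom-disjoint from the children: the algorithm replaces each nested argument $t_{j}()$ by its consolidated expression $t_{j}^{i}$ or $t_{j}^{f}$ (Definition~\ref{def:consolidated-conditions}), and, crucially, \emph{both} variants are placed into $L$. Hence the merged child model must validate $p.pat()$ simultaneously under $f_{j}\mapsto t_{j}^{i}$ and under $f_{j}\mapsto t_{j}^{f}$, while $t_{j}^{i},t_{j}^{f}$ are already pinned down by the child and may be \emph{disjunctions} of its atoms, e.g.\ $w^{f}=(d\dis e)\dis f$. The plan here is to argue that substituting such a sub-expression for a formal argument preserves satisfiability of the consistent outer pattern, using that a child's consolidated interface respects the same $\som ini$, $ini\imp\som fin$, $\alw\neg(ini\con fin)$ discipline as a single activity. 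I would first verify this on the primitives of Fig.~\ref{fig:predefined-P} — where each $p.pat()$ is in fact satisfied by the valuation making every atom \false, so the two substitutions cannot clash — and then extract the condition a general $P$ must meet for the induction to close. Making this substitution step rigorous for arbitrary disjunctive interfaces, rather than only for the sample $P$, is the delicate heart of the argument.
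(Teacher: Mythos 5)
Your route is genuinely different from the paper's. The paper proves the theorem by a four-case commentary on the lines of Algorithm~\ref{alg:generating-specification}: initialization is trivially consistent, the processing order adds nothing, the atomic case (lines~\ref{alg:2:atomic-s}--\ref{alg:2:atomic-e}) cannot introduce contradictions because patterns are consistent and pairwise atom-disjoint, and the nested case (lines~\ref{alg:2:non-atomic-s}--\ref{alg:2:non-atomic-e}) is dispatched in one sentence by appeal to the consistency of the transition formulas of Definition~\ref{def:pattern-consistent}. You instead work semantically: consistency is reduced to satisfiability, a single model for all of $L$ is assembled by a disjoint-support merge lemma, and the construction is organized as an induction on the nesting structure. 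Your merge lemma is a clean and correct formalization of exactly what the paper's lines~\ref{alg:2:atomic-s}--\ref{alg:2:atomic-e} case asserts informally, and it makes precise why the atom-disjointness hypothesis is needed; up to that point your argument is tighter than the paper's.

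The gap is the one you name yourself: the inductive step for lines~\ref{alg:2:non-atomic-s}--\ref{alg:2:non-atomic-e} is not closed. The substituted copies of $p.pat()$ share atoms with the child specifications, so the merge lemma does not apply there, and the model built for the children must simultaneously validate $p.pat()$ under the substitution $f_{j}\mapsto t_{j}^{i}$ and under $f_{j}\mapsto t_{j}^{f}$, where the consolidated expressions may be disjunctions already constrained by the child's own formulas. You verify this only for the sample $P$ of Fig.~\ref{fig:predefined-P} (via the all-\false{} valuation) and explicitly leave open what an arbitrary consistent $P$ must satisfy for the induction to go through; the theorem's hypotheses alone do not obviously supply that, so as written your proof does not establish the stated claim in full generality. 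It is worth noting that you have located the crux precisely: this substitution step is exactly the point at which the paper's own proof offers only an assertion rather than an argument, so your proposal does not fall short of the paper so much as it exposes where the real work remains.
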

\begin{proof}
Let us consider four cases each of which refers to the important points of the Algorithm~\ref{alg:generating-specification}.\\
\textit{Line~\ref{alg:2:atomic-ini}}. Immediately due to the initiation of $L$.\\
\textit{Line~\ref{alg:2:atomic-for}}. Patterns processing order does not affect the consistency of a logical specification
             since it itself does not add new formulas to the logical specification.\\
\textit{Lines~\ref{alg:2:atomic-s}--\ref{alg:2:atomic-e}}. Due to the assumptions of the disjointedness of atomic formulas for patterns
             and the disjointedness of patterns in a logical expression
             (note that any two patterns are either completely disjointed or
             completely contained one in the other),
             and also the assumption of the predefined patterns consistency,
             c.f.\ Definition~\ref{def:pattern-consistent},
             it is not possible to introduce contradictions when adding a new elementary set of formulas.\\
\textit{Lines~\ref{alg:2:non-atomic-s}--\ref{alg:2:non-atomic-e}}. Due to the consistency of transition and basic formulas,
             c.f.\ Definition~\ref{def:pattern-consistent},
             as well as syntax, and properties, and consistency of transition formulas,
             the newly-generated logical specification is consistent,
             it follows from the fact that new temporal formulas for $ini$- and $fin$-conditions
             refer to the consistent transition formulas of a pattern.
\end{proof}

\section{Conclusions}
\label{sec:conclusions}


The proposed method of generating enables scaling up,
that is migration from small problems,
e.g.\ comprising a single UML activity diagram and its workflow,
or a single pool for business processes,
to real-world problems
with more and more diagrams and pools with nesting patterns.
This gives hope for practical use in the case of problems of any size.
Another advantage for the idea of logical patterns is the fact that
they are once well-defined and could be widely and commonly used by an inexperienced user.
Then, it does not necessarily require in-depth knowledge of the complex aspects of temporal logic by an ordinary user.
Introducing patterns as logical primitives foreshadows progress in building logical specifications
that breaks some barriers and obstacles mentioned in the work for practical use of deduction-based formal methods.

Further works may include:
other logical properties of the approach,
more enriched logics,
different workflow patterns, and
different classes of applications.

\bibliographystyle{elsarticle-num}
\bibliography{rk-bib-rk,rk-bib-main}

\end{document}